\DeclareSymbolFont{bbold}{U}{bbold}{m}{n}
\DeclareSymbolFontAlphabet{\mathbbold}{bbold}
\newtheorem{theorem}{Theorem}
\newtheorem{proposition}{Proposition}
\newtheorem*{lemma*}{Lemma}
\theoremstyle{definition}
	\newtheorem{remark}{Remark}
        \newtheorem{assumption}{Assumption}
	\newtheorem*{example*}{Example}
\newcommand{\change}[1]{#1}
\newcommand{\emptyconc}[2]{\begin{smallmatrix} #1 \\ #2 \end{smallmatrix}}
\newcommand{\smallconc}[2]{\left( \begin{smallmatrix} #1 \\ #2 \end{smallmatrix} \right)}
\newcommand{\geqse}{\geq_{\mathrm{SE}}}
\newcommand{\leqse}{\leq_{\mathrm{SE}}}
\newcommand{\R}{\mathbb{R}}
\newcommand{\IR}{\mathbb{IR}}
\newcommand{\calP}{\mathcal{P}}
\newcommand{\calR}{\mathcal{R}}
\newcommand{\calS}{\mathcal{S}}
\newcommand{\calT}{\mathcal{T}}
\newcommand{\calW}{\mathcal{W}}
\newcommand{\calX}{\mathcal{X}}
\newcommand{\calY}{\mathcal{Y}}
\newcommand{\bfd}{\mathbf{d}}
\newcommand{\bfs}{\mathbf{s}}
\newcommand{\bfw}{\mathbf{w}}
\newcommand{\sfc}{\mathsf{c}}
\newcommand{\sfE}{\mathsf{E}}
\newcommand{\sfF}{\mathsf{F}}
\newcommand{\sfG}{\mathsf{G}}
\newcommand{\sfJ}{\mathsf{J}}
\newcommand{\sfN}{\mathsf{N}}
\newcommand{\sfR}{\mathsf{R}}
\newcommand{\ul}[1]{\underline{#1}}
\newcommand{\ula}{\ul{a}}
\newcommand{\ulb}{\ul{b}}
\newcommand{\uld}{\ul{d}}
\newcommand{\ulv}{\ul{v}}
\newcommand{\ulw}{\ul{w}}
\newcommand{\ulx}{\ul{x}}
\newcommand{\uly}{\ul{y}}
\newcommand{\ulz}{\ul{z}}
\newcommand{\ulA}{\ul{A}}
\newcommand{\ulB}{\ul{B}}
\newcommand{\ol}[1]{\overline{#1}}
\newcommand{\ola}{\ol{a}}
\newcommand{\olb}{\ol{b}}
\newcommand{\old}{\ol{d}}
\newcommand{\olv}{\ol{v}}
\newcommand{\olw}{\ol{w}}
\newcommand{\olx}{\ol{x}}
\newcommand{\oly}{\ol{y}}
\newcommand{\olz}{\ol{z}}
\newcommand{\olA}{\ol{A}}
\newcommand{\olB}{\ol{B}}
\newcommand{\real}{\R}
\newcommand{\overcirc}[1]{\accentset{\circ}{#1}}
\newcommand{\Tgeq}{\calT_{\geq0}}
\begin{document}

\title{Forward Invariance in Neural Network\\Controlled Systems}
\author{Akash Harapanahalli, \IEEEmembership{Student Member, IEEE}, Saber Jafarpour, \IEEEmembership{Member, IEEE}, and Samuel Coogan, \IEEEmembership{Senior Member, IEEE}
\thanks{}
\thanks{Akash Harapanahalli and Samuel Coogan are with the School of Electrical and Computer Engineering at Georgia Institute of Technology, Atlanta, GA, USA. \texttt{\{aharapan,sam.coogan\}@gatech.edu}.}
\thanks{Saber Jafarpour is with the Department of Electrical, Computer, and Energy Engineering, University of Colorado, Boulder. \texttt{saber.jafarpour@colorado.edu}.
}
\thanks{This work was supported in part by the National Science Foundation under grants 1749357 and 2219755 and the Air Force Office of Scientific Research under Grant FA9550-23-1-0303.}
}

\pagestyle{empty}
\maketitle
\thispagestyle{empty}

\begin{abstract}
We present a framework based on interval analysis and monotone systems theory to certify and search for forward invariant sets in nonlinear systems with neural network controllers.
The framework (i) constructs localized first-order inclusion functions for the closed-loop system using Jacobian bounds and existing neural network verification tools; (ii) builds a dynamical embedding system where its evaluation along a single trajectory directly corresponds with a nested family of hyper-rectangles provably converging to an attractive set of the original system; (iii) utilizes linear transformations to build families of nested paralleletopes with the same properties.
The framework is automated in Python using our interval analysis toolbox~\texttt{npinterval}, in conjunction with the symbolic arithmetic toolbox~\texttt{sympy}, demonstrated on an $8$-dimensional leader-follower system.

\end{abstract}

\begin{IEEEkeywords}
Neural networks, forward invariance
\end{IEEEkeywords}

\section{Introduction} \label{sec:introduction}
\IEEEPARstart{L}{earning} enabled components are becoming increasingly prevalent in modern control systems. Their ease of computation and ability to outperform optimization-based approaches make them valuable for in-the-loop usage\change{\cite{SC-etal:18}}. However, neural networks are known to be vulnerable to input perturbations---small changes in their input can yield wildly varying results. 
In safety-critical applications, under uncertainty in the system, it is paramount to verify safe system behavior for an infinite time horizon. Such behaviors are guaranteed through \textit{robust forward invariant sets}, \emph{i.e.}, a set for which a system will never leave under any uncertainty. 

Forward invariant sets are useful for a variety of tasks.  In monitoring, a safety specification extends to infinite time if the system is guaranteed to enter an invariant set. Additionally, for asymptotic behavior of a system, an invariant set can be used as a robustness margin to replace the traditional equilibrium viewpoint in the presence of disturbances.
In designing controllers, one can induce infinite-time safe behavior by ensuring the existence of an invariant set containing all initial states of the system and excluding any unsafe regions. There are several classical techniques in the literature for certifying forward invariant sets, such as Lyapunov-based analysis~\cite{UT-AP-PS:08}, barrier-based methods~\cite{AA-etal:2019}, and set-based approaches~\cite{FB:99}. However, a na\"ive application of these methods generally fails when confronted with high-dimensional and highly nonlinear neural network controllers in-the-loop. 

\paragraph*{Literature review}
The problem of verifying the input-output behavior of standalone neural networks has been studied extensively~\cite{CL-TA-CL-CS-CB-MJK:21}.
There is a growing body of literature studying verification of neural networks applied in feedback loops, which presents unique challenges due to the accumulation of error in closed-loop, \emph{i.e.}, the wrapping effect. 
For example, there are functional approaches such as POLAR~\cite{CH-JF-XC-WL-QZ:22}, JuliaReach~\cite{CS-MF-SG:22}, and ReachMM \change{\cite{SJ-AH-SC:23, SJ-AH-SC:23c}} for nonlinear systems, and linear (resp. semi-definite) programming ReachLP~\cite{ME-GH-CS-JPH:21} (resp. Reach-SDP~\cite{HH-MF-MM-GJP:20}) for linear systems. 
While these methods verify finite time safety, their guarantees do not readily extend to infinite time.
In particular, it is not clear how to adapt these tools to search and certify forward invariant sets of neural network controlled systems.
There are a handful of papers that directly study forward invariance for neural networks in dynamics: In~\cite{AS-RGS:21} a set-based approach is used to study forward invariance of a specific class of control-affine systems with feedforward neural network controllers. In~\cite{HY-PS-MA:22} an ellipsoidal inner-approximation of a region of attraction of neural network controlled system is obtained using Integral Quadratic Constraints (IQCs). In~\cite{HD-BL-LY-MP-RT:21}, a Lyapunov-based approach is used to study robust invariance of control systems modeled by neural networks.  
In~\cite{EB-MG-DP:21}, an adaptive template polytopic approach using MILP verifies RL-based controllers.

\paragraph*{Contributions}

In this letter, we propose a dynamical system approach for systematically finding nested families of robust forward invariant sets for nonlinear systems controlled by neural networks.
Our method uses \emph{localized} first-order inclusion functions to construct an embedding system, which evaluates the inclusion function separately on the edges of a hyper-rectangle.
Our first result is Proposition~\ref{prop:invariance}, which certifies (and fully characterizes for some systems) the forward invariance of a hyper-rectangle through the embedding system's evaluation at a single point.
Our main result is Theorem~\ref{thm:conv}, which describes how a single trajectory of the dynamical embedding system can be used to construct a nested family of invariant and attracting hyper-rectangles.
However, in many applications, hyper-rectangles are not suitable for capturing forward invariant regions, and a simple linear transformation can greatly improve results.
In Proposition~\ref{prop:Tclinclfun}, we carefully construct an accurate localized inclusion function for any linear transformation on the original system, which we use in Theorem~\ref{thm:Tinvsets} to find a nested family of forward invariant paralleletopes.
Finally, we implement the framework in Python, demonstrating its applicability to an $8$-dimensional leader-follower system.\footnote{All code for the numerical experiments can be found at \\ \url{https://github.com/gtfactslab/Harapanahalli_LCSS2024}.} 
\change{In previous work~\cite{SJ-AH-SC:23,SJ-AH-SC:23c}, we consider the online problem of efficiently overapproximating the reachable set of nonlinear learning-enabled systems. In this letter, we consider the offline problem of searching for invariant sets, which greatly benefit from the novel use of localization and state transformations.}
\paragraph*{Notation}
Define the partial order $\leq$ on $\R^n$ as $x\leq y$ if and only if $x_i \leq y_i$ for every $i = 1,\dots,n$. 
For two vectors $\ulx,\olx\in\R^n$ such that $\ulx\leq\olx$, denote the (closed) interval $[\ulx,\olx]=\{x:\ulx\leq x\leq\olx\}$. The set of intervals of $\R^n$ is denoted by $\IR^n$. 
For $[\ula,\ola],[\ulb,\olb]\in\IR$ and $[\ulA,\olA]\in\IR^{m\times p},[\ulB,\olB]\in\IR^{p\times n}$,
\begin{enumerate}
    \item $[\ula,\ola] + [\ulb,\olb] := [\ula + \ulb, \ola + \olb]$ (also on $\IR^n$ element-wise);
    \item $[\ula,\ola]\cdot[\ulb,\olb] := [\min \{\ula\ulb,\ula\olb,\ola\ulb,\ola\olb\}, \max\{\ula\ulb,\ula\olb,\ola\ulb,\ola\olb\}] $;
    \item $([\ulA,\olA][\ulB,\olB])_{i,j} := \sum_{k=1}^{p} [\ulA_{i,k}, \olA_{i,k}]\cdot[\ulB_{k,j},\olB_{k,j}]$.
\end{enumerate}
For two vectors $x,y\in\R^n$ and $i\in\{1,\dots,n\}$, let $x_{i:y}
\in\R^n$ be the vector obtained by replacing the $i$th entry of $x$ with that of $y$, \emph{i.e.},  $(x_{i:y})_j=y_j$ if $i=j$ and otherwise $(x_{i:y})_j=x_j$.

The partial order $\leq$ on $\R^n$ induces the \emph{southeast partial order} $\leqse$ on $\R^{2n}$ as $\big(\emptyconc{x}{\widehat{x}}\big)\leqse\smallconc{y}{\widehat{y}}$ if and only if $x\leq y$ and \change{$\widehat{y}\leq \widehat{x}$.}
Let $\calT^{2n}_{\geq0}=\{\smallconc{x}{\widehat{x}}\in\R^{2n} : x \leq \widehat{x}\}$.
\change{Note that $\Tgeq^{2n}\simeq\IR^n$, and define} $\left[\smallconc{\ulx}{\olx}\right] := [\ulx,\olx]$.
Given a mapping $g$ and a set $\calX\subseteq\operatorname{dom}(g)$, \change{define} the set $g(\calX) := \{g(x) : x\in\calX\}$.

For the nonlinear system $\dot{x} = f(x,w)$ with initial condition $x_0\in\calX_0$ and disturbance $w\in\calW$ \change{for all time}, define the \textit{reachable set} as $
    \calR_{f}(t,\calX_0,\calW) = \left\{
    \begin{aligned}
    \phi&_{f}(t, x_0, \mathbf{w}),\,\forall x_0\in\calX,\\&\mathbf{w}:\change{[0,\infty)}\rightarrow\calW \change{\text{ PW cont.}}
    \end{aligned}
    \right\}
$
where $t\mapsto\phi_f(t,x_0,\bfw)$ is the flow of the system from initial condition $x_0$ at time $0$ under disturbance mapping $\bfw$. A set $\mathcal{X}\subseteq \real^n$ is $\calW$-robustly forward invariant if for every $t\in \real_{\ge 0}$, we have $\calR_{f}(t,\calX,\calW)\subseteq\calX$.   \change{$\mathcal{X}$ is a $\calW$-attracting set with region of attraction $\mathcal{Y}\subseteq \real^n$ if $\calX$ is $\calW$-robustly forward invariant, and for every $x_0\in \calY$, every piecewise continuous $\bfw:\change{[0,\infty)}\to \mathcal{W}$, and every open neighborhood $N\supseteq\calX$, there exists $T^*\ge 0$ such that $\phi_f(t,x_0,\bfw) \in N$, for every $t\ge T^*$.}

\section{Problem Statement} \label{sec:probstatement}

Consider a nonlinear dynamical system of the form 
\begin{align} \label{eq:oldynsys}
    \dot{x} = f(x,u,w),
\end{align}
where $x\in\R^n$ is the state of the system, $u\in\R^p$ is the control input, $w\in\calW\subset\R^q$ is a disturbance, and $f:\R^n\times\R^p\times\R^q\to\R^n$ is a parameterized vector field. 
We assume that the state feedback to the system is defined by a \change{continuously applied} $k$-layer feed-forward neural network controller $N:\R^{n}\to \R^p$:
  \begin{align}\label{eq:NN}
  \begin{aligned}
  \xi^{(0)}&= x, \quad\quad u = N(x) := W^{(k)}\xi^{(k)}(x) +b^{(k)}, \\
   \xi^{(i)} &= \phi^{(i-1)}(W^{(i-1)} \xi^{(i-1)}+ b^{(i-1)}), \quad i\in \{1,\ldots,k\}
  \end{aligned}
    \end{align}
 where $n_i$ is the number of neurons in the $i$th layer, $W^{(i-1)}\in \real^{n_i\times n_{i-1}}$ is the weight matrix of the $i$th layer, $b^{(i-1)}\in \real^{n_i}$ is the bias vector of the $i$th layer, $\xi^{(i)}(y)\in \real^{n_i}$ is the $i$th layer hidden variable, and
$\phi^{(i-1)}:\real^{n_{i}}\to \real^{n_i}$ is the $i$th layer diagonal activation function satisfying $0\le \frac{\phi_j^{(i-1)}(x)-\phi_j^{(i-1)}(y)}{x-y}\le 1$
for every $j\in \{1,\ldots,n_i\}$. A large class of activation functions including ReLU, leaky ReLU, sigmoid, and tanh satisfies this condition (after a possible re-scaling of their co-domain). In feedback with this controller, define the closed-loop neural network controlled system
\begin{align} \label{eq:cldynsys}
    \dot{x} = f^\sfc(x,w) := f(x,N(x),w).
\end{align}
Our goal is to find sets $\calX$ such that, starting inside $\mathcal{X}$, the reachable set of the closed-loop system remains inside $\mathcal{X}$ for all times $t\ge 0$ and for any disturbances in $\calW$, \emph{i.e.}, $\mathcal{W}$-robustly forward invariant sets of the closed-loop system~\eqref{eq:cldynsys}.

\section{Inclusion Functions for Neural Network Controlled Systems} \label{sec:inclfuns}

\subsection{Inclusion Functions} \label{subsec:inclfuns}

Interval analysis aims to provide interval bounds on the output of a function given an interval of possible inputs~\cite{LJ-MK-OD-EW:01}. 
Given a function $f:\R^n\to\R^m$, the function $\sfF=\smallconc{\ul{\sfF}}{\ol{\sfF}}:\Tgeq^{2n}\to\Tgeq^{2m}$ is called an  \textit{inclusion function} for $f$ if
\begin{align}\label{eq:bounds}
    \ul{\sfF}(\ulx,\olx) \leq f(x) \leq \ol{\sfF}(\ulx,\olx),\quad \mbox{for every }x\in[\ulx,\olx],
\end{align}
for every interval $[\ulx,\olx]\subset \R^n$, and is an \textit{$\calS$-localized inclusion function} if the bounds~\eqref{eq:bounds} are valid for every interval $[\ulx,\olx]\subseteq\calS$, in which case we instead write $\sfF_{\calS}$.
Additionally, an inclusion function is
\begin{enumerate}
    \item  \textit{monotone} if $\sfF(\ulx,\olx)\geqse\sfF(\uly,\oly)$, for any $[\ulx,\olx]\subseteq[\uly,\oly]$;
    \item \textit{thin} if for any $x$, we have $\underline{\sfF}(x,x) = f(x) = \overline{\sfF}(x,x)$; 
    \item \textit{minimal} if $\sfF$ returns the tightest possible interval, \emph{i.e.} for each $i\in\{1,\dots,m\}$,
    \begin{align*}
        \ul{\sfF}^\mathrm{min}_i (\ulx,\olx) = \inf_{x\in[\ulx,\olx]} f_i(x), \quad
        \ol{\sfF}^\mathrm{min}_i (\ulx,\olx) = \sup_{x\in[\ulx,\olx]} f_i(x).
    \end{align*}
\end{enumerate}
Given the one-to-one correspondence \change{$\Tgeq^{2n}\simeq\IR^n$,}
an inclusion function is often interpreted as a mapping from intervals to intervals---given an inclusion function $\sfF = \smallconc{\ul{\sfF}}{\ol{\sfF}}:\Tgeq^{2n}\to\Tgeq^{2m}$, we use the notation $[\sfF] = [\ul{\sfF},\ol{\sfF}]:\IR^n\to\IR^m$ to denote the equivalent interval-valued function with interval argument.

In this paper, we focus on two main methods to construct inclusion functions:
\change{(i)} Given a composite function $f = f_1\circ f_2\circ \cdots \circ f_N$, and inclusion functions $\sfF_i$ for $f_i$ for every $i\in\{1,\dots,N\}$, the \textit{natural inclusion function}
\begin{align}\label{eq:natinclfun}
    \sfF^\mathrm{nat} (\ulx,\olx) := (\sfF_1 \circ \sfF_2 \cdots \circ \sfF_N) (\ulx,\olx)
\end{align}
provides a simple but possibly conservative method to build inclusion functions using the inclusion functions of simpler mappings; 
\change{(ii)} Given a differentiable function $f$\change{, an inclusion function $\sfJ_x$ for its Jacobian, and a centering point $\overcirc{x}\in[\ulx,\olx],$ the \emph{Jacobian-based inclusion function}}
\begin{align}\label{eq:jacinclfun}
    [\sfF^\mathrm{jac} (\ulx,\olx)] := [\sfJ_x(\ulx,\olx)]([\ulx,\olx] - \overcirc{x}) + f(\overcirc{x}),
\end{align}
can provide better estimates by bounding the first order Taylor expansion of $f$ \change{around $\overcirc{x}$}.
Both of these inclusion functions are monotone (resp. thin) assuming the inclusion functions used to build them are also monotone (resp. thin).

\change{In previous work~\cite{AH-SJ-SC:23b}, we introduce the open source package \texttt{npinterval} which automates natural inclusion functions in \texttt{numpy}. When used with a symbolic toolbox like \texttt{sympy}, one can construct Jacobian-based inclusion functions.}

\subsection{Localized Closed-Loop Inclusion Functions} \label{subsec:clinclfun}

One of the biggest challenges in neural network controlled system verification is correctly capturing the interactions between the system and the controller. 
For invariance analysis, it is paramount to capture the stabilizing nature of the controller, which can easily be lost with na\"{i}ve overbounding of the input-output interactions between the system and controller. We make the following assumption throughout. 

\begin{assumption}[Local affine bounds of neural network] \label{asm:linearapprox}
    For the neural network~\eqref{eq:NN}, there exists an algorithm that, for any interval $[\ul\xi,\ol\xi]$, produces a local affine bound $(C_{[\ul\xi,\ol\xi]},\uld_{[\ul\xi,\ol\xi]},\old_{[\ul\xi,\ol\xi]})$  such that for every $x\in[\ul\xi,\ol\xi]$,
    \[C_{[\ul\xi,\ol\xi]} x + \uld_{[\ul\xi,\ol\xi]} \leq N(x) \leq C_{[\ul\xi,\ol\xi]} x + \old_{[\ul\xi,\ol\xi]}.\]
\end{assumption}
Many off-the-shelf neural network verification frameworks can produce the linear estimates required in Assumption~\ref{asm:linearapprox}, and in particular, we focus on CROWN~\cite{HZ-etal:18}. For ReLU and otherwise piecewise linear networks, one can setup a mixed integer linear program similar to~\cite{VT-KX-RT:19}, which is tractable for small-sized networks.
\change{Frameworks like} {\verb|auto_LiRPA|}\change{\cite{xu2020automatic} operate on general computational graphs, and thus satisfy this assumption for a wide variety neural network architectures, \emph{e.g.}, residual neural networks, recurrent neural networks, and convolutional neural networks.}

The bounds from Assumption \ref{asm:linearapprox} can be used to construct a $[\ul\xi,\ol\xi]$-localized inclusion function for $N(x)$:
\begin{align} \label{eq:NNinclfun}
\begin{aligned}
    \ul{\sfN}_{[\ul\xi,\ol\xi]}(\ulx,\olx) &= C^+_{[\ul\xi,\ol\xi]} \ulx + C^-_{[\ul\xi,\ol\xi]} \olx + \uld_{[\ul\xi,\ol\xi]},\\  
    \ol{\sfN}_{[\ul\xi,\ol\xi]}(\ulx,\olx) &= C^-_{[\ul\xi,\ol\xi]} \ulx + C^+_{[\ul\xi,\ol\xi]} \olx + \old_{[\ul\xi,\ol\xi]},
\end{aligned}
\end{align}
where $(C^+)_{i,j} = \max(C_{i,j},0)$, and $C^- = C - C^+$. 

Using the localized first-order bounds of the neural network, we propose a general framework for constructing closed-loop inclusion functions for neural network-controlled systems that capture the first-order stabilizing effects of the controller. First, assuming $f$ is differentiable, with \change{inclusion functions $\sfJ_x,\sfJ_u,\sfJ_w$ for the Jacobians $D_xf,D_uf,D_wf$}, one can construct a closed-loop Jacobian-based inclusion function $\sfF^\sfc$. \change{Given an interval $[\ulz,\olz]$, with $\sfJ_x,\sfJ_u,\sfJ_w$} evaluated on the input $(\ulz,\olz,\ul{\sfN}_{[\ulz,\olz]}(\ulz,\olz),\ol{\sfN}_{[\ulz,\olz]}(\ulz,\olz),\ulw,\olw)$, define 
\begin{align} \label{eq:clinclfun}
\begin{aligned}
    [\sfF^\sfc_{[\ulz,\olz]} &(\ulx,\olx,\ulw,\olw)]  = ([\sfJ_x] + [\sfJ_u] C_{[\ulx,\olx]}) [\ulx,\olx] \\ &+ [\sfJ_u] [\uld_{[\ulx,\olx]},\old_{[\ulx,\olx]}] + [\sfR_{[\ulz,\olz]}(\ulw,\olw)],
\end{aligned}
\end{align}
where $\overcirc{x}\in[\ulx,\olx]\subseteq[\ulz,\olz]$, $\overcirc{u}=N(\overcirc{x})$, $\overcirc{w}\in[\ulw,\olw]$, and $[\sfR_{[\ulz,\olz]}(\ulw,\olw)] := -[\sfJ_x]\overcirc{x} - [\sfJ_u]\overcirc{u}  + [\sfJ_w]([\ulw,\olw] - \overcirc{w}) + f(\overcirc{x},\overcirc{u},\overcirc{w})$. 
Proposition~\ref{prop:Tclinclfun} provides a more general result proving~\eqref{eq:clinclfun} is a $[\ulz,\olz]$-localized inclusion function for $f^\sfc$ ($T=I$).

In the case that $f$ is not differentiable, or finding an inclusion function for its Jacobian is difficult, as long as a (monotone) inclusion function $\sfF$ for the open-loop system $f$ is known, a (monotone) closed-loop inclusion function for $f^\sfc$ from~\eqref{eq:cldynsys} can be constructed using the natural inclusion approach in~\eqref{eq:natinclfun} with $\sfN$ from~\eqref{eq:NNinclfun}
\begin{align} \label{eq:clinterbased}
    \sfF^\sfc(\ulx,\olx,\ulw,\olw) = \sfF(\ulx,\olx,\ul{\sfN}_{[\ulx,\olx]}(\ulx,\olx),\ol{\sfN}_{[\ulx,\olx]}(\ulx,\olx),\ulw,\olw).
\end{align}
\vspace{-1.5em}
\change{ \begin{remark}[Interval observer]
    In the case that the system state is not perfectly known
    but rather the output of an interval observer, one can modify~\eqref{eq:cldynsys} to $\dot{x} = f(x,N(x + v),w)$, where $v\in[\ulv,\olv]$ is the interval observer uncertainty. One can incorporate this into either inclusion function by bloating the localization of calls to Assumption~\ref{asm:linearapprox}. For~\eqref{eq:clinclfun}, as long as $[\ulx+\ulv,\olx+\olv]\subseteq[\ulz,\olz]$, replace $(C_{[\ulx,\olx]},\uld_{[\ulx,\olx]},\old_{[\ulx,\olx]})$ with $(C_{[\ulx+\ulv,\olx+\olv]},\uld_{[\ulx+\ulv,\olx+\olv]},\old_{[\ulx+\ulv,\olx+\olv]})$. 
    For~\eqref{eq:clinterbased}, replace $\sfN_{[\ulx,\olx]}$ with $\sfN_{[\ulx+\ulv,\olx+\olv]}$.
\end{remark} }

\section{A Dynamical Approach to Set Invariance} \label{sec:embeddingsystem}

Using the closed-loop inclusion functions developed in Section~\ref{sec:inclfuns}, we embed the uncertain dynamical system~\eqref{eq:cldynsys} into a larger certain system that enables computationally tractable approaches to \change{verify and compute} families of invariant sets. 
Consider the closed-loop system~\eqref{eq:cldynsys} with an $\calS$-localized inclusion function $\sfF^\sfc_{\calS}:\Tgeq^{2n}\times\Tgeq^{2q}\to\Tgeq^{2n}$ for $f^\sfc$ constructed via, \emph{e.g.}, \eqref{eq:clinclfun} or \eqref{eq:clinterbased}, with the disturbance set $\calW\subseteq[\ulw,\olw]$. Then $\sfF^\sfc_{\calS}$ \textit{induces} an \textit{embedding system} for~\eqref{eq:cldynsys} with state $\smallconc{\ulx}{\olx}\in \Tgeq^{2n}$ and dynamics defined by
\begin{align}\label{eq:embsys}
\begin{aligned}
    \dot{\ulx}_i = \Big(\ul{\sfE}_{\calS}^{\change{\sfc}}(\ulx,\olx,\ulw,\olw)\Big)_i := \Big(\ul{\sfF}^\sfc_{\calS}(\ulx,\olx_{i:\ulx},\ulw,\olw)\Big)_i,\\
    \dot{\olx}_i = \Big(\ol{\sfE}_{\calS}^{\change{\sfc}}(\ulx,\olx,\ulw,\olw)\Big)_i := \Big(\ol{\sfF}^\sfc_{\calS}(\ulx_{i:\olx},\olx,\ulw,\olw)\Big)_i,
\end{aligned}
\end{align}
where $\sfE_{\calS}^{\change{\sfc}}:\Tgeq^{2n}\times\Tgeq^{2q}\to\R^{2n}$.
One of the key features of the embedding system, which evolves on $\Tgeq^{2n}$, is that the inclusion function is evaluated separately on each face of the hyper-rectangle $[\ulx,\olx]$, represented by $[\ulx,\olx_{i:\ulx}]$ and $[\ulx_{i:\olx},\olx]$ for each $i\in\{1,\dots,n\}$.
In Proposition~\ref{prop:invariance}, this meshes nicely with Nagumo's Theorem~\cite{FB:99}, which allows us to guarantee forward invariance by checking the boundary of the hyper-rectangle through one evaluation of the embedding system~\eqref{eq:embsys}. 

\begin{proposition}[Forward invariance in hyper-rectangles]\label{prop:invariance}
Consider the neural network controlled system~\eqref{eq:cldynsys} with the disturbance set $\mathcal{W}=[\ulw,\olw]$ \change{and initial condition $x_0\in[\ulx^\star,\olx^\star]$. Given a set $\calS\supseteq[\ulx^\star,\olx^\star]$, let} $\sfF^{\sfc}_\calS$ be a $\calS$-localized inclusion function for $f^{\sfc}$, \emph{e.g.}~\eqref{eq:clinclfun} or~\eqref{eq:clinterbased}, and $\sfE_\calS^{\change{\sfc}}$ be the embedding system induced by $\sfF^{\sfc}_\calS$. %
If
\begin{align}\label{eq:verygoodcondition}
     \sfE_\calS^{\change{\sfc}} (\ulx^\star,\olx^\star,\ulw,\olw) \geqse 0,
\end{align}
then $[\ulx^\star,\olx^\star]$ is a $[\ulw,\olw]$-robustly forward invariant set. Moreover, if $\sfF^\sfc_\calS$ is the minimal inclusion function \change{of $f^\sfc$}, the condition~\eqref{eq:verygoodcondition} is also necessary for $[\ulx^\star,\olx^\star]$ to be a $[\ulw,\olw]$-robustly forward invariant set.
\end{proposition}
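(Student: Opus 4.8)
The plan is to reduce the claim to Nagumo's Theorem applied to the closed-loop system on the compact convex set $[\ulx^\star,\olx^\star]$. Recall Nagumo's condition for forward invariance: a closed convex set $\calX$ is forward invariant for $\dot x = f^\sfc(x,w)$ (robustly over $w\in[\ulw,\olw]$) if and only if, at every boundary point $x\in\partial\calX$, the vector field $f^\sfc(x,w)$ lies in the tangent cone $\calT_\calX(x)$ for all $w\in[\ulw,\olw]$. For a hyper-rectangle, the boundary decomposes into the $2n$ faces $\{x\in[\ulx^\star,\olx^\star] : x_i = \ulx^\star_i\}$ and $\{x\in[\ulx^\star,\olx^\star] : x_i = \olx^\star_i\}$, and on the lower $i$-th face the tangent cone condition reduces to $f^\sfc_i(x,w)\geq 0$, while on the upper $i$-th face it reduces to $f^\sfc_i(x,w)\leq 0$ (corners, lying in several faces, are handled by requiring all the relevant inequalities, which is automatically implied). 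So forward invariance is equivalent to: for every $i$, $f^\sfc_i(x,w)\geq 0$ whenever $x\in[\ulx^\star,\olx^\star]$ with $x_i=\ulx^\star_i$, and $f^\sfc_i(x,w)\leq 0$ whenever $x\in[\ulx^\star,\olx^\star]$ with $x_i=\olx^\star_i$, for all $w\in[\ulw,\olw]$.

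For the sufficiency direction, I would observe that the lower $i$-th face is exactly the interval $[\ulx^\star,\olx^\star_{i:\ulx^\star}]$, which is contained in $[\ulx^\star,\olx^\star]\subseteq\calS$, so the $\calS$-localized inclusion property of $\sfF^\sfc_\calS$ gives, for every $x$ in that face and every $w\in[\ulw,\olw]$,
\begin{align*}
    f^\sfc_i(x,w) \geq \bigl(\ul{\sfF}^\sfc_\calS(\ulx^\star,\olx^\star_{i:\ulx^\star},\ulw,\olw)\bigr)_i = \bigl(\ul{\sfE}_\calS(\ulx^\star,\olx^\star,\ulw,\olw)\bigr)_i \geq 0,
\end{align*}
where the last inequality is precisely the lower block of the hypothesis~\eqref{eq:verygoodcondition}. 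Symmetrically, on the upper $i$-th face $[\ulx^\star_{i:\olx^\star},\olx^\star]$ we get $f^\sfc_i(x,w)\leq (\ol{\sfE}_\calS(\ulx^\star,\olx^\star,\ulw,\olw))_i \leq 0$ from the upper block. Hence Nagumo's condition holds at every boundary point, and $[\ulx^\star,\olx^\star]$ is $[\ulw,\olw]$-robustly forward invariant. (One should note the mild regularity hypothesis needed for Nagumo/Bony–Brezis—continuity of $f^\sfc$ and, say, local Lipschitzness for uniqueness of trajectories—which holds here since $f$ is assumed differentiable and $N$ is a composition of affine maps with Lipschitz activations.)

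For necessity, assume $\sfF^\sfc_\calS$ is the minimal inclusion function. Then $(\ul{\sfE}_\calS(\ulx^\star,\olx^\star,\ulw,\olw))_i = \ul{\sfF}^{\sfc,\min}_{\calS,i}(\ulx^\star,\olx^\star_{i:\ulx^\star},\ulw,\olw) = \inf\{f^\sfc_i(x,w) : x\in[\ulx^\star,\olx^\star_{i:\ulx^\star}],\ w\in[\ulw,\olw]\}$, i.e., the exact infimum of $f^\sfc_i$ over the lower $i$-th face (extended over the disturbance box). If $[\ulx^\star,\olx^\star]$ is robustly forward invariant, Nagumo's condition forces $f^\sfc_i\geq 0$ pointwise on that face for all admissible $w$, so the infimum is $\geq 0$; doing this for all $i$ and the symmetric upper faces yields exactly $\sfE_\calS(\ulx^\star,\olx^\star,\ulw,\olw)\geqse 0$. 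The main subtlety to get right is the precise correspondence between the $\leqse$ ordering on $\R^{2n}$ and the two families of scalar inequalities—$\geqse 0$ unpacks as $\ul{\sfE}\geq 0$ \emph{and} $\ol{\sfE}\leq 0$—together with careful bookkeeping of which face corresponds to $\olx^\star_{i:\ulx^\star}$ versus $\ulx^\star_{i:\olx^\star}$; everything else is a direct application of Nagumo's Theorem and the definitions of inclusion function and minimality. I do not expect a serious obstacle beyond stating the regularity assumptions under which Nagumo's characterization is valid.
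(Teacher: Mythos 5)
Your proposal is correct and follows essentially the same route as the paper: both directions reduce to Nagumo's theorem by identifying the $i$-th lower and upper faces with the intervals $[\ulx^\star,\olx^\star_{i:\ulx^\star}]$ and $[\ulx^\star_{i:\olx^\star},\olx^\star]$ on which the embedding system evaluates the inclusion function, with minimality giving the exact face-wise infima/suprema for necessity. The only cosmetic difference is that you argue necessity directly while the paper argues the contrapositive; these are equivalent.
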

\begin{proof}
    For brevity, since $[\ulx^\star,\olx^\star]\subseteq\calS$, we drop $\calS$ from the notation.
    Consider the set $[\ulx^\star,\olx^\star]$, and suppose that $\sfE^{\change{\sfc}}(\ulx^\star,\olx^\star,\ulw,\olw) \geqse 0$. 
    Therefore, for every $i\in\{1,\dots,n\}$, 
    \begin{align*}
        0 \leq \ul{\sfF}^\sfc_i(\ulx^\star,\olx^\star_{i:\ulx^\star},\ulw,\olw) \leq \inf_{x\in[\ulx^\star,\olx^\star_{i:\ulx^\star}],w\in[\ulw,\olw]}f^\sfc(x,w). 
    \end{align*} 
    This implies that $f^\sfc(x,w)\geq 0$ for every $x$ on the $i$-th lower face of the hyperrectangle $[\ulx^\star,\olx^\star_{i:\ulx^\star}]$. Similarly, $f^\sfc(x,w)\leq 0$ on the $i$-th upper face of the hyperrectangle $[\ulx^\star_{i:\olx^\star},\olx^\star]$.
    Since this holds for every $i\in \{1,\ldots,n\}$, by Nagumo's theorem~\cite[Theorem 3.1]{FB:99}, the closed set $[\ulx^\star, \olx^\star]$ is forward invariant since for every point $x$ along its boundary $\bigcup_i ([\ulx^\star,\olx^\star_{i:\ulx^\star}] \cup [\ulx^\star_{i:\olx^\star},\olx^\star])$, the vector field $f^\sfc(x,w)$ points into the set, for every $w\in [\ulw,\olw]$.
    Now, suppose that $\sfE$ is the embedding system induced by the minimal inclusion function $\sfF^\mathrm{min}$ for $f^\sfc$, and $[\ulx^\star,\olx^\star]$ is a hyper-rectangle such that $\sfE(\ulx^\star,\olx^\star,\ulw,\olw) \not\geq_\text{SE} 0$. Then there exists $i\in \{1,\ldots,n\}$ such that either
    \begin{align*} \ul{\sfF}^\mathrm{min}_i(\ulx^\star,\olx^\star_{i:\ulx^\star},\ulw,\olw) &= \inf_{x\in[\ulx^\star,\olx^\star_{i:\ulx^\star}],w\in[\ulw,\olw]}f(x,w)<0, \text{ or} \\
        \ol{\sfF}^\mathrm{min}_i(\ulx^\star_{i:\olx^\star},\olx^\star,\ulw,\olw) &= \sup_{x\in[\ulx^\star_{i:\olx^\star},\olx^\star],w\in[\ulw,\olw]}f(x,w)>0.
    \end{align*}
    If the first case holds, then there exists $x'\in[\ulx^\star,\olx^\star_{i:\ulx}],w \in [\ulw,\olw]$ such that $f^\sfc(x',w)<0$ along the $i$-th lower face of the hyper-rectangle. 
    If the second case holds, then there exists $x'\in[\ulx^\star_{i:\olx},\olx^\star],w \in [\ulw,\olw]$ such that $f^\sfc(x',w)>0$ along the $i$-th upper face of the hyper-rectangle.
    Thus, by Nagumo's theorem, the set $[\ulx^\star,\olx^\star]$ is not $[\ulw,\olw]$-robustly forward invariant, as there exists a point along its boundary such that the vector field $f^\sfc$ points outside the set.
\end{proof}
\begin{remark}[Linear systems \change{with piecewise linear activations}]%
    For the special case of the linear system $\dot{x} = Ax + Bu + Dw$ controlled by a neural network $u=N(x)$ with piecewise linear activations, \emph{e.g.} ReLU or Leaky ReLU, %
    one can compute the minimal inclusion function using a Mixed Integer Linear Program (MILP) similar to~\cite{VT-KX-RT:19}. For $i\in\{1,\dots,n\}$,
    \begin{align*}
    \ul{\sfF}^\mathrm{min}_i(\ulx,\olx,\ulw,\olw) &= \min_{x\in[\ulx,\olx],w\in[\ulw,\olw]} (Ax + BN(x) + Dw)_i, \\
    \ol{\sfF}^\mathrm{min}_i(\ulx,\olx,\ulw,\olw) &= \max_{x\in[\ulx,\olx],w\in[\ulw,\olw]} (Ax + BN(x) + Dw)_i.
    \end{align*}
\end{remark}

The next Theorem shows how monotonicity of the \change{embedding dynamics define} a family of nested robustly forward invariant sets using the condition from Proposition~\ref{prop:invariance}.

\begin{theorem}[A nested family of invariant sets] \label{thm:conv}
Consider the neural network controlled system~\eqref{eq:cldynsys} with the disturbance set $\mathcal{W}=[\ulw,\olw]$ \change{and initial condition $x_0\in[\ulx_0,\olx_0]$. Given a set $\calS\supseteq[\ulx_0,\olx_0]$, let} $\sfF^{\sfc}_\calS$ be a $\calS$-localized monotone inclusion function for $f^{\sfc}$, \emph{e.g.}~\eqref{eq:clinclfun} or~\eqref{eq:clinterbased}, and $\sfE_\calS^{\change{\sfc}}$ be the embedding system induced by $\sfF^{\sfc}_\calS$. %
If 
\[\sfE_\calS^{\change{\sfc}}(\ulx_0,\olx_0,\ulw,\olw) \geqse 0,\]
\change{then} 
\begin{enumerate}[i)]
    \item\label{p2:conv} $[\ulx(t),\olx(t)]$ is a $[\ulw,\olw]$-robustly forward invariant set for the system~\eqref{eq:cldynsys} for every $t\ge 0$, and for every $t\leq \tau$, $[\ulx(\tau),\olx(\tau)] \subseteq [\ulx(t),\olx(t)]$; and
    \item\label{p3:conv} $\lim_{t\to\infty} \smallconc{\ulx(t)}{\olx(t)} = \smallconc{\ulx^\star}{\olx^\star}$, where $\smallconc{\ulx^\star}{\olx^\star} \in \mathcal{T}^{2n}_{\ge 0}$ is  an equilibrium point of the embedding system~\eqref{eq:embsys} and $[\ulx^\star,\olx^\star]$ is a $[\ulw,\olw]$-attracting set for the system~\eqref{eq:cldynsys} with region of attraction $[\ulx_0,\olx_0]$,
\end{enumerate}
\change{where $t\mapsto \smallconc{\ulx(t)}{\olx(t)}$ is the trajectory of~\eqref{eq:embsys} from $\Big(\emptyconc{\ulx_0}{\olx_0}\Big)$.}
\end{theorem}
\begin{proof}
\change{(Monotonicity of $\sfE_\calS^\sfc$ dynamics).\ 
Consider any two points $\Big(\emptyconc{\ulx}{\olx}\Big),\smallconc{\ulx'}{\olx'}\in\Tgeq^{2n}$, such that $\Big(\emptyconc{\ulx}{\olx}\Big) \leqse \smallconc{\ulx'}{\olx'}$. This implies that $\Big(\emptyconc{\ulx}{\olx_{i:\ulx}}\Big) \leqse \Big(\emptyconc{\ulx'}{\olx'_{i:\ulx'}}\Big)$ and $\Big(\emptyconc{\ulx_{i:\olx}}{\olx}\Big) \leqse \smallconc{\ulx'_{i:\olx'}}{\olx'}$. 
Since $\sfF^\sfc_\calS$ is a monotone inclusion function,
\begin{align*}
    \big(\ul{\sfF}^\sfc_{\calS}(\ulx,\olx_{i:\ulx},\ulw,\olw)\big)_i & \leq
    \big(\ul{\sfF}^\sfc_{\calS}(\ulx',\olx'_{i:\ulx'},\ulw,\olw)\big)_i, \\
    \big(\ol{\sfF}^\sfc_{\calS}(\ulx_{i:\ulx},\olx,\ulw,\olw)\big)_i & \geq
    \big(\ol{\sfF}^\sfc_{\calS}(\ulx'_{i:\ulx'},\olx',\ulw,\olw)\big)_i,
\end{align*}}
for every $i\in \{1,\ldots,n\}$. This implies that the embedding system $\sfE^\sfc_\calS$ is monotone w.r.t. the southeast partial order $\leqse$~\cite{DA-EDS:03, GAE-HLS-EDS:06}. 
\change{(Part (i)).} \ 
Now using~\cite[Proposition 2.1]{HLS:95}, the set $\calP_+=\left\{\smallconc{\ulx}{\olx} : \smallconc{\ulx}{\olx} \geqse 0 \right\}$ is a forward invariant set for $\sfE^\sfc_\calS$. Since $\Big(\emptyconc{\ulx_0}{\olx_0}\Big)\in\calP_+$, forward invariance implies $\smallconc{\ulx(t)}{\olx(t)}\in\calP_+$ for every $t\geq 0$. Therefore, using Proposition~\ref{prop:invariance}, $[\ulx(t),\olx(t)]$ is forward invariant for the closed-loop system~\eqref{eq:cldynsys} for every $t\geq 0$. Additionally, the curve $t\mapsto \smallconc{\ulx(t)}{\olx(t)}$ is nondecreasing with respect to the partial order $\leqse$~\cite[Proposition 2.1]{HLS:95}. This means that for every $t\le \tau$, $\smallconc{\ulx(t)}{\olx(t)} \leqse \smallconc{\ulx(\tau)}{\olx(\tau)}$, which implies that $[\ulx(\tau),\olx(\tau)]\subseteq [\ulx(t),\olx(t)]$.
\change{(Part (ii)).}\ 
Since $t\mapsto \smallconc{\ulx(t)}{\olx(t)}$ is nondecreasing w.r.t. $\leqse$, for every $i\in \{1,\ldots,n\}$, the curves $t\mapsto \ulx_i(t)$ (resp. $t\mapsto \olx_i(t)$) are nondecreasing (resp. nonincreasing) w.r.t. $\leq$ and bounded on $\real$. This implies that there exists $\ulx^*_i$ and $\olx_i^*$ such that $\lim_{t\to\infty} \ulx_i(t) = \ulx_i^*$ and $\lim_{t\to\infty} \olx_i(t) = \olx_i^*$. By defining the vector $\ulx^* = (\ulx^*_1,\ldots,\ulx^*_n)^{\top}$ and $\olx^* = (\olx^*_1,\ldots,\olx^*_n)^{\top}$, we get $\lim_{t\to\infty} \smallconc{\ulx(t)}{\olx(t)} = \smallconc{\ulx^\star}{\olx^\star}$. Moreover, since $\smallconc{\ulx(t)}{\olx(t)}\in \mathcal{T}^{2n}_{\ge 0}$ for every $t\ge 0$ and is a continuous curve in time $t$, we get $\smallconc{\ulx^\star}{\olx^\star}\in \mathcal{T}^{2n}_{\ge 0}$.
Finally, $\calR_{f^\sfc}(t,[\ulx_0,\olx_0],[\ulw,\olw])\subseteq [\ulx(t),\olx(t)]$ for every $t\ge 0$~\cite[Proposition 5]{SJ-AH-SC:23c}. Thus, every trajectory of the system starting from $[\ulx_0,\olx_0]$ converges to $[\ulx^\star,\olx^{\star}]$.
\end{proof}

After finding one invariant set using Proposition~\ref{prop:invariance}, Theorem~\ref{thm:conv} obtains a nested family of invariant sets guaranteed to converge to some 
$[\ulx^\star,\olx^\star]$, obtained by evolving the embedding system forwards in time. This is the smallest invariant set in the family, and is a $[\ulw,\olw]$-attractive set with region of attraction $[\ulx_0,\olx_0]$. Note that the embedding system can also be evolved backwards in time while $[\ulx(t),\olx(t)]\geqse 0$ and $[\ulx(t),\olx(t)]\subseteq\calS$ to expand the invariant sets. 
Additionally, Proposition~\ref{prop:invariance} and Theorem~\ref{thm:conv} do not require thin inclusion functions, \change{generalizing} existing decomposition-based approaches for \change{finding} hyper-rectangular invariant sets~\cite{MA-SC:20}.

\section{Paralleletope Invariant Sets} \label{sec:paralleletope}

Theorem~\ref{thm:conv} can be used to verify and search for hyper-rectangular invariant sets using the embedding system. %
In this section, we extend our framework to characterize a more general class of invariant paralleletopes. For some invertible matrix $T\in \R^{n\times n}$, define the $T$-transformed system 
\begin{align} \label{eq:Tclsys}
\begin{aligned}
    y &:= Tx := \Phi(x) \\
    \dot{y} &= g^\sfc(y,w) = Tf(T^{-1}y, N'(y), w),
\end{aligned}
\end{align}
where $N'(y) := N(T^{-1}y)$ is the neural network from~\eqref{eq:NN}, with an extra initial layer with weight matrix $T^{-1}$ and linear activation $\phi(x) = x$.
There is a one-to-one correspondence between the transformed system~\eqref{eq:Tclsys} and the original system~\eqref{eq:cldynsys}, in the sense that every trajectory $t\mapsto y(t)$ of~\eqref{eq:Tclsys} uniquely corresponds with the trajectory $t\mapsto\Phi^{-1}(y(t))$ of~\eqref{eq:cldynsys}. Given an interval $[\uly,\oly]$, the set $\Phi^{-1}([\uly,\oly]) = \{T^{-1}y : y\in[\uly,\oly]\}$ defines a paralleletope in standard coordinates.
We construct a localized closed-loop Jacobian-based inclusion function $\sfG^\sfc_{\Phi([\ulz,\olz])}$ \change{for $g^\sfc$ as follows. Given an interval $[\ulz,\olz]$ in standard coordinates, with inclusion functions $\sfJ_x,\sfJ_u,\sfJ_w$ for the Jacobians $D_xf,D_uf,D_wf$ of the original system evaluated on the input $(\ulz,\olz,\ul{\sfN}_{[\ulz,\olz]}(\ulz,\olz),\ol{\sfN}_{[\ulz,\olz]}(\ulz,\olz),\ulw,\olw)$, define}
\begin{align} 
    [\sfG^\sfc_{\Phi([\ulz,\olz])} &(\uly,\oly,\ulw,\olw)]  = T([\sfJ_x] + [\sfJ_u] (C'_{[\uly,\oly]}T))T^{-1}[\uly,\oly] \nonumber \\ 
    \label{eq:Tclinclfun}
    &+ T[\sfJ_u] [\uld'_{[\uly,\oly]},\old'_{[\uly,\oly]}] + T[\sfR_{[\ulz,\olz]}(\ulw,\olw)],
\end{align}
where $(C',\uld',\old')$ are from Assumption~\ref{asm:linearapprox} evaluated over $[\uly,\oly]$ on the transformed neural network $N'(y) = N(T^{-1}y)$, $\overcirc{x}\in\change{\Phi^{-1}([\uly,\oly])\subseteq[\ulz,\olz]}$, $\overcirc{u}\change{\in\sfN_{[\ulz,\olz]}(\ulz,\olz)}$, $\overcirc{w}\in[\ulw,\olw]$\change{, and $[\sfR_{[\ulz,\olz]}(\ulw,\olw)] = -[\sfJ_x]\overcirc{x} - [\sfJ_u]\overcirc{u}  + [\sfJ_w]([\ulw,\olw] - \overcirc{w}) + f(\overcirc{x},\overcirc{u},\overcirc{w})$.}

\begin{proposition} \label{prop:Tclinclfun}
    Consider the neural network controlled system~\eqref{eq:cldynsys} 
    and let $T\in\R^{n\times n}$ be an invertible matrix transforming the system into~\eqref{eq:Tclsys}. Then~\eqref{eq:Tclinclfun} is a $\Phi([\ulz,\olz])$-localized (monotone) inclusion function for $g^\sfc$.
\end{proposition}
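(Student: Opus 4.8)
The plan is to verify the defining inclusion~\eqref{eq:bounds} for $g^\sfc$ directly. Fix any interval $[\uly,\oly]\subseteq\Phi([\ulz,\olz])$, a point $y\in[\uly,\oly]$, and a disturbance $w\in[\ulw,\olw]$; it suffices to show $g^\sfc(y,w)\in[\sfG^\sfc_{\Phi([\ulz,\olz])}(\uly,\oly,\ulw,\olw)]$. The key reduction is the identity $g^\sfc(y,w)=Tf^\sfc(T^{-1}y,w)=Tf(T^{-1}y,N'(y),w)$, which exhibits~\eqref{eq:Tclinclfun} as the construction~\eqref{eq:clinclfun} applied to the transformed dynamics; in particular, specializing the argument below to $T=I$ also establishes the claim stated after~\eqref{eq:clinclfun} that~\eqref{eq:clinclfun} is a $[\ulz,\olz]$-localized inclusion function for $f^\sfc$. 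Writing $x:=T^{-1}y$, the hypothesis $[\uly,\oly]\subseteq\Phi([\ulz,\olz])$ forces $x\in[\ulz,\olz]$, hence $N'(y)=N(x)\in[\ul{\sfN}_{[\ulz,\olz]}(\ulz,\olz),\ol{\sfN}_{[\ulz,\olz]}(\ulz,\olz)]$; combined with $\overcirc{x}\in[\ulz,\olz]$, $\overcirc{u}=N(\overcirc{x})$, and $\overcirc{w},w\in[\ulw,\olw]$, convexity then puts the entire segment from $(\overcirc{x},\overcirc{u},\overcirc{w})$ to $(x,N'(y),w)$ inside the box on which the Jacobian inclusion functions $[\sfJ_x],[\sfJ_u],[\sfJ_w]$ are evaluated.

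Next I would apply the first-order Taylor expansion with integral remainder to the differentiable map $f$ along that segment --- note that only $f$, not the possibly nonsmooth $N'$, need be differentiable --- obtaining matrices $\bar{J}_x\in[\sfJ_x]$, $\bar{J}_u\in[\sfJ_u]$, $\bar{J}_w\in[\sfJ_w]$ with $f(x,N'(y),w)=f(\overcirc{x},\overcirc{u},\overcirc{w})+\bar{J}_x(x-\overcirc{x})+\bar{J}_u(N'(y)-\overcirc{u})+\bar{J}_w(w-\overcirc{w})$. I would then invoke Assumption~\ref{asm:linearapprox} for $N'$ --- legitimate since $N'$ is again a network of the form~\eqref{eq:NN}, with an extra identity-activation layer of weight $T^{-1}$ --- to write $N'(y)=C'_{[\uly,\oly]}y+d'$ for some $d'\in[\uld'_{[\uly,\oly]},\old'_{[\uly,\oly]}]$, substitute $x=T^{-1}y$, multiply through by $T$, and regroup the $y$-linear part using $\bar{J}_xT^{-1}+\bar{J}_uC'_{[\uly,\oly]}=(\bar{J}_x+\bar{J}_uC'_{[\uly,\oly]}T)T^{-1}$. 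This yields the exact closed form $g^\sfc(y,w)=T(\bar{J}_x+\bar{J}_uC'_{[\uly,\oly]}T)T^{-1}y+T\bar{J}_u d'+T\big(-\bar{J}_x\overcirc{x}-\bar{J}_u\overcirc{u}+\bar{J}_w(w-\overcirc{w})+f(\overcirc{x},\overcirc{u},\overcirc{w})\big)$.

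Now, because $\bar{J}_x\in[\sfJ_x]$, $\bar{J}_u\in[\sfJ_u]$, $\bar{J}_w\in[\sfJ_w]$, $d'\in[\uld'_{[\uly,\oly]},\old'_{[\uly,\oly]}]$, $y\in[\uly,\oly]$, $w\in[\ulw,\olw]$, and interval sums and products are inclusion isotone, enclosing each of the three summands of this closed form --- in the exact grouping prescribed by~\eqref{eq:Tclinclfun} --- shows $g^\sfc(y,w)\in[\sfG^\sfc_{\Phi([\ulz,\olz])}(\uly,\oly,\ulw,\olw)]$, the third summand being precisely $T[\sfR_{[\ulz,\olz]}(\ulw,\olw)]$; this is the desired inclusion property. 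For the monotone case, observe that~\eqref{eq:Tclinclfun} is built from the same monotonicity-preserving pieces as~\eqref{eq:clinclfun} (the monotone interval Jacobian bounds, the affine-bound oracle of Assumption~\ref{asm:linearapprox} applied to $N'$, and the interval sum and product operations), composed with multiplication by the fixed matrices $T$ and $T^{-1}$, which preserves monotonicity of an inclusion function; hence monotonicity of~\eqref{eq:clinclfun} carries over to~\eqref{eq:Tclinclfun}.

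I expect the main obstacle to be disciplined bookkeeping with interval arithmetic rather than any conceptual difficulty. Because interval-matrix multiplication is only sub-distributive and not associative, the rearrangement leading to the displayed closed form for $g^\sfc(y,w)$ must be carried out pointwise with the genuine mean-value Jacobians $\bar{J}_x,\bar{J}_u,\bar{J}_w$ and the realized offset $d'$, and one may pass to interval enclosures only at the very last step, and only in the exact parenthesization $T\big([\sfJ_x]+[\sfJ_u](C'_{[\uly,\oly]}T)\big)T^{-1}[\uly,\oly]$ fixed by~\eqref{eq:Tclinclfun}. The one other point needing care is the localization: one must check that $[\uly,\oly]\subseteq\Phi([\ulz,\olz])$ is exactly the hypothesis keeping $\overcirc{x}$ and every mean-value segment inside $[\ulz,\olz]$ and the associated neural-network output box, which is also why $\overcirc{x}$ may be taken anywhere in $[\ulz,\olz]$, not merely in $T^{-1}[\uly,\oly]$.
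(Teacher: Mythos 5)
Your proof is correct and follows essentially the same route as the paper's: a mean-value/first-order expansion of $f$ with the interval Jacobian bounds evaluated on $(\ulz,\olz,\ul{\sfN}_{[\ulz,\olz]}(\ulz,\olz),\ol{\sfN}_{[\ulz,\olz]}(\ulz,\olz),\ulw,\olw)$, substitution of the affine bounds for $N'$ from Assumption~\ref{asm:linearapprox}, regrouping into the parenthesization of~\eqref{eq:Tclinclfun}, and the observation that monotonicity is preserved under composition with interval sums and fixed-matrix products. Your version is merely more explicit about the bookkeeping (realizing concrete $\bar{J}_x,\bar{J}_u,\bar{J}_w,d'$ pointwise before passing to interval enclosures), which is a stylistic refinement rather than a different argument.
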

\begin{proof}
\change{
    For $x\in[\ulz,\olz]$ (mean-value, see~\cite[Section 2.4.3]{LJ-MK-OD-EW:01}),
    \begin{align*}
        Tf^\sfc(T^{-1}y,w) &\in T[\sfJ_x](T^{-1}y - \overcirc{x}) + T[\sfJ_w](w - \overcirc{w}) \\ 
        & \quad + T[\sfJ_u](N(T^{-1}y) - \overcirc{u}) + Tf(\overcirc{x},\overcirc{u},\overcirc{w}).
    \end{align*}
    Considering any interval $[\uly,\oly]\subseteq\Phi([\ulz,\olz])$ containing $y$, using $(C'_{[\uly,\oly]},\uld'_{[\uly,\oly]},\old'_{[\uly,\oly]})$ from Assumption~\ref{asm:linearapprox} on $N'$,
    \begin{align*}
        g^\sfc(y,w) &\in T[\sfJ_x](T^{-1}y - \overcirc{x}) + T[\sfJ_w](w - \overcirc{w}) \\ 
        & \hspace{-3em} + T[\sfJ_u](C'_{[\uly,\oly]}y + [\uld'_{[\uly,\oly]},\old'_{[\uly,\oly]}] - \overcirc{u}) + Tf(\overcirc{x},\overcirc{u},\overcirc{w}), \\
        g^\sfc(y,w) &\in T([\sfJ_x] + [\sfJ_u] (C'_{[\uly,\oly]}T))T^{-1}[\uly,\oly] \\ 
        & \hspace{-1em} + T[\sfJ_u] [\uld'_{[\uly,\oly]},\old'_{[\uly,\oly]}] + T[\sfR_{[\ulz,\olz]}(\ulw,\olw)].
    \end{align*}

    \vspace{-1.85em}
}
\end{proof}

It is important to note that \change{the inclusion functions $\sfJ_x,\sfJ_u,\sfJ_w$} in~\eqref{eq:Tclinclfun} are \change{evaluated} in the original coordinates. 
\change{Instead, one could symbolically} write $g$ as a new system and directly apply the closed-loop Jacobian-based inclusion function from~\eqref{eq:clinclfun}. In practice, however, these transformed dynamics often have complicated expressions that lead to excessive conservatism when using natural inclusion functions, and are not suitable for characterizing invariant sets.  %
In the next Theorem, we link forward invariant hyper-rectangles in transformed coordinates to forward invariant paralleletopes in standard coordinates.

\begin{theorem}[Forward invariance in paralleletopes] \label{thm:Tinvsets}
    Consider the neural network controlled system~\eqref{eq:cldynsys} \change{with the disturbance set $\calW=[\ulw,\olw]$. Let} $T\in\R^{n\times n}$ be an invertible matrix transforming the system into~\eqref{eq:Tclsys}\change{, with initial condition $y_0\in[\uly_0,\oly_0]$}. \change{Given a set $\calS\supseteq[\uly_0,\oly_0]$,} let $\sfG^\sfc_\calS$ be a $\calS$-localized inclusion function for $g^\sfc$, \emph{e.g.}~\eqref{eq:Tclinclfun}, and let $\sfE^{\change{\sfc}}_{T,\calS}$ be the embedding system~\eqref{eq:embsys} induced by $\sfG^\sfc_{\calS}$. If
    \[
        \sfE^{\change{\sfc}}_{T,\calS}(\uly_0,\oly_0,\ulw,\olw) \geqse 0,
    \]
    then the paralleletope $\Phi^{-1}([\uly_0,\oly_0])$ is a $[\ulw,\olw]$-robustly forward invariant set for the neural network controlled system~\eqref{eq:cldynsys}.
\end{theorem}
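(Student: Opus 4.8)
The plan is to reduce the claim to Proposition~\ref{prop:invariance} applied to the transformed system~\eqref{eq:Tclsys}, and then to transport the conclusion back to standard coordinates through the linear diffeomorphism $\Phi = T(\cdot)$. First I would observe that~\eqref{eq:Tclsys} is itself an instance of the neural network controlled template~\eqref{eq:cldynsys}: the map $g(y,u,w) := Tf(T^{-1}y,u,w)$ is a parameterized vector field on $\R^n\times\R^p\times\R^q$, and $N'(y) = N(T^{-1}y)$ is a $(k+1)$-layer feed-forward network of the form~\eqref{eq:NN}, obtained by prepending a layer with weight $T^{-1}$, zero bias, and the identity activation $\phi(x)=x$, which trivially satisfies the slope condition $0\le \frac{\phi_j(x)-\phi_j(y)}{x-y}\le 1$. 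Hence $g^\sfc(y,w) = g(y,N'(y),w)$ is the closed loop of~\eqref{eq:Tclsys}, and all hypotheses of Proposition~\ref{prop:invariance} are available: $\sfG^\sfc_\calS$ is a $\calS$-localized inclusion function for $g^\sfc$ by assumption (e.g.\ via Proposition~\ref{prop:Tclinclfun} when~\eqref{eq:Tclinclfun} is used), $\sfE_{T,\calS}$ is the embedding system~\eqref{eq:embsys} it induces, $[\uly_0,\oly_0]\subseteq\calS$, and $\sfE_{T,\calS}(\uly_0,\oly_0,\ulw,\olw)\geqse 0$ by hypothesis.

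Applying Proposition~\ref{prop:invariance} to the transformed system then yields directly that $[\uly_0,\oly_0]$ is a $[\ulw,\olw]$-robustly forward invariant set for~\eqref{eq:Tclsys}, i.e.\ $\calR_{g^\sfc}(t,[\uly_0,\oly_0],[\ulw,\olw])\subseteq[\uly_0,\oly_0]$ for every $t\ge0$. Next I would transfer this to standard coordinates using that $\dot y = g^\sfc(y,w)$ is obtained from $\dot x = f^\sfc(x,w)$ by the substitution $y = \Phi(x) = Tx$ with $T$ invertible, so the flows are conjugate: $\phi_{g^\sfc}(t,\Phi(x_0),\bfw) = \Phi\big(\phi_{f^\sfc}(t,x_0,\bfw)\big)$ for every $x_0$, every piecewise continuous $\bfw$ valued in $[\ulw,\olw]$, and every $t$ in the common domain of existence. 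Consequently, for any $x_0\in\Phi^{-1}([\uly_0,\oly_0])$ we have $\Phi(x_0)\in[\uly_0,\oly_0]$, forward invariance of $[\uly_0,\oly_0]$ gives $\phi_{g^\sfc}(t,\Phi(x_0),\bfw)\in[\uly_0,\oly_0]$ for all $t\ge0$ (so in particular the trajectory never escapes and is defined for all $t\ge0$), and therefore $\phi_{f^\sfc}(t,x_0,\bfw) = \Phi^{-1}\big(\phi_{g^\sfc}(t,\Phi(x_0),\bfw)\big)\in\Phi^{-1}([\uly_0,\oly_0])$. Equivalently, $\calR_{f^\sfc}(t,\Phi^{-1}([\uly_0,\oly_0]),[\ulw,\olw]) = \Phi^{-1}\big(\calR_{g^\sfc}(t,[\uly_0,\oly_0],[\ulw,\olw])\big)\subseteq\Phi^{-1}([\uly_0,\oly_0])$, which is the asserted robust forward invariance of the paralleletope.

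I do not expect a serious obstacle; the only points needing care are bookkeeping ones — verifying that the prepended identity-activation layer keeps $N'$ inside the class~\eqref{eq:NN}, and justifying the flow conjugacy, which is immediate for a linear change of variables by uniqueness of solutions of the resulting ODE. A fully self-contained alternative would bypass Proposition~\ref{prop:invariance} and apply Nagumo's theorem directly to $\Phi^{-1}([\uly_0,\oly_0])$: its boundary is the union of the images under $T^{-1}$ of the faces of $[\uly_0,\oly_0]$, the outer normals transform by $T^{-\top}$, and the sign conditions extracted from $\sfE_{T,\calS}(\uly_0,\oly_0,\ulw,\olw)\geqse0$ together with the inclusion property of $\sfG^\sfc_\calS$ show that $f^\sfc(x,w)$ points into the paralleletope on each face for every $w\in[\ulw,\olw]$. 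Routing through Proposition~\ref{prop:invariance} is cleaner, so that is the route I would present.
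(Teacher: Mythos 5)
Your proof is correct, and it follows exactly the route the paper intends: the paper itself omits a proof of Theorem~\ref{thm:Tinvsets}, having set up Proposition~\ref{prop:Tclinclfun} precisely so that the transformed system~\eqref{eq:Tclsys} satisfies the hypotheses of Proposition~\ref{prop:invariance}, after which the conclusion transfers to standard coordinates via the linear conjugacy $y=\Phi(x)=Tx$ as you describe. Your two bookkeeping checks (that prepending the identity-activation layer keeps $N'$ in the class~\eqref{eq:NN}, and that the flows of $f^\sfc$ and $g^\sfc$ are conjugate under $\Phi$) are exactly the points worth making explicit, and both are handled correctly.
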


\change{
\begin{proof}
    Consider any trajectory $t\mapsto x(t)$ of the original system~\eqref{eq:cldynsys} starting from $x_0\in \Phi^{-1}([\uly_0,\oly_0])$. Given the one-to-one correspondence between~\eqref{eq:cldynsys} and~\eqref{eq:Tclsys}, the curve $t\mapsto \Phi(x(t))$ is the trajectory of the transformed system~\eqref{eq:Tclsys} starting from $y_0= \Phi(x_0)$. 
    Using Proposition~\ref{prop:invariance}, $\sfE^{\change{\sfc}}_{T,\mathcal{S}}(\uly_0,\oly_0,\ulw,\olw) \geqse 0$ implies that the hyper-rectangle $[\uly_0,\oly_0]$ is $[\ulw,\olw]$-robustly forward invariant in the transformed system~\eqref{eq:Tclsys}. This implies that $y(t)\in[\uly_0,\oly_0]$ for every $t\geq 0$. Since $x(t) = \Phi^{-1}(y(t))$, it follows that $x(t)\in\Phi^{-1}([\uly_0,\oly_0])$ for every $t\geq 0$.
\end{proof}
}
When using~\eqref{eq:Tclinclfun} as $\sfG^\sfc_\calS$, the neural network verification step from Assumption~\ref{asm:linearapprox} to find $(C',\uld',\old')$ is evaluated separately on each face of the hyperrectangle \change{$[\uly_0,\oly_0]$}, \emph{i.e.}, each face of the paralleletope \change{$\Phi^{-1}([\uly_0,\oly_0])$}. 
The dynamical approach from Theorem~\ref{thm:conv} yields a nested family of invariant hyperrectangles for the transformed system~\eqref{eq:Tclsys}, corresponding to a nested family of invariant paralleletopes for the original system~\eqref{eq:cldynsys}.
There are principled ways of choosing $T$\change{, \emph{e.g.},} the Jordan decomposition of the linearization about an equilibrium.
\section{Numerical Experiments} \label{sec:numerical}

Consider two vehicles $\mathrm{L}$ and $\mathrm{F}$ each with dynamics
\begin{align} \label{eq:LFdynamics}
    \begin{aligned}
        \dot{p}_x^j &= v_x^j, & \quad\quad \dot{v}_x^j &= \sigma(u_x^j) + w_x^j, \\
        \dot{p}_y^j &= v_y^j, & \quad\quad \dot{v}_y^j &= \sigma(u_y^j) + w_y^j,
    \end{aligned}
\end{align}
for $j\in\{\mathrm{L},\mathrm{F}\}$, where $p^j = (p_x^j,p_y^j)\in\R^2$ is the displacement of the center of mass of $j$ in the plane, $v^j = (v_x^j,v_y^j)\in\R^2$ is the velocity of the center of mass of $j$, $(u_x^j,u_y^j)\in\R^2$ are desired acceleration inputs limited by the nonlinear softmax operator $\sigma(u)=u_{\text{lim}}\tanh(u/u_{\text{lim}})$ with $u_\text{lim}=20$, and \change{$w_x^j,w_y^j\in [-0.005,0.005]$} are bounded disturbances on $j$. 
Denote the combined state of the system $x:=(p^\mathrm{L},v^\mathrm{L},p^\mathrm{F},v^\mathrm{F})\in\R^8$. We consider a leader-follower structure for the system, where the leader vehicle $\mathrm{L}$ chooses its control $u=(u_x^\mathrm{L},u_y^\mathrm{L})$ as the output of a \change{state feedback continuously applied} neural network \change{controller} ($4\times 100 \times 100 \times 2$, ReLU activations)\change{, with input $(p_x^\mathrm{L}, p_y^{\mathrm{L}}, v_x^\mathrm{L}, v_y^\mathrm{L})$.} The neural network was trained \change{using imitation learning on} $5.7$M data points from an offline MPC control policy for the leader only, with control limits implemented as hard constraints rather than $\sigma$. The offline policy minimized a quadratic cost aiming to stabilize to the origin while avoiding four circular obstacles centered at $(\pm 4, \pm4)$ with radius $2.25$ each, implemented as hard constraints with $33\%$ padding and a slack variable. 
The follower vehicle $\mathrm{F}$ follows the leader with a PD controller
\begin{align}
    u_{\bfd}^\mathrm{F} = k_{p} (p_\bfd^\mathrm{L} - p_\bfd^\mathrm{F}) + k_{v} (v_\bfd^\mathrm{L} - v_\bfd^\mathrm{F}),
\end{align}
for each $\bfd\in\{x,y\}$ with $k_p=6$ and $k_v=7$. 

\begin{figure}
    \centering
    \includegraphics[width=\columnwidth]{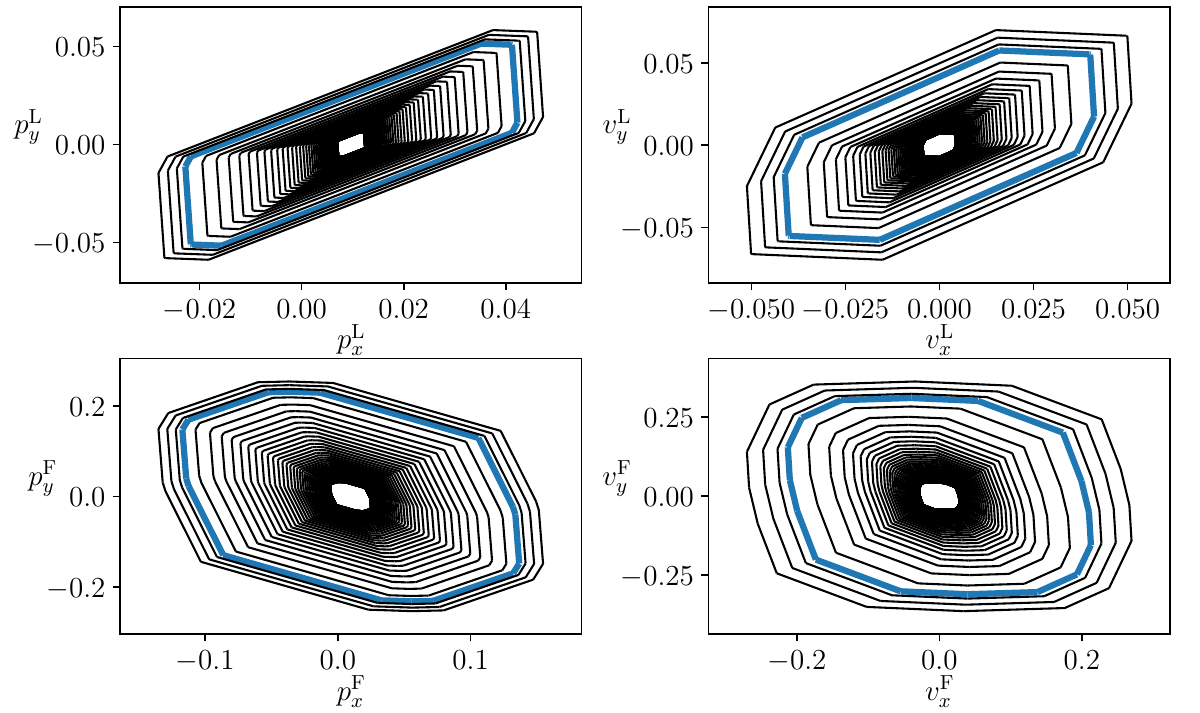}
    \caption{
    A family of \change{93} paralleletope invariant sets for the leader-follower system~\eqref{eq:LFdynamics} are visualized using projections from $\R^8$ onto $4$  $\R^2$ planes. 
    \textbf{Top Left:} projection onto leader's position $p_x^\mathrm{L},p_y^\mathrm{L}$; \textbf{Top Right:} projection onto leader's velocity $v_x^\mathrm{L},v_y^\mathrm{L}$; \textbf{Bottom Left:} projection onto follower's position $p_x^\mathrm{F},p_y^\mathrm{F}$; \textbf{Bottom Right:} projection onto follower's velocity $v_x^\mathrm{F},v_y^\mathrm{F}$. 
    After one invariant set $\Phi^{-1}([\uly_0,\oly_0])$ is found (blue line), the transformed embedding system is integrated forwards until \change{approximate} convergence, and backwards until Proposition~\ref{prop:invariance} is violated, yielding a monotonically decreasing collection of nested invariant sets converging to an attractive set (innermost lines).
    }
    \label{fig:leaderfollower}
    \vspace{-1em}
\end{figure}

First, a trajectory of the undisturbed system is run until it reaches the equilibrium point $x^\star\approx[0.01,0,0,0,0.01,0,0,0]^T$. Then, using \texttt{sympy}, the Jacobian matrices $D_\bfs f(x^\star,N(x^\star),0)$ for $\bfs\in\{x,u,w\}$ are computed, along with CROWN (\texttt{same-slope}) using \verb|auto_LiRPA|~\cite{xu2020automatic} along the interval $[\ulz,\olz] := x^\star + [-0.06,0.06]^4\times [-0.25,0.25]^2\times[-0.325,0.325]^2$ yielding $(C_{[\ulz,\olz]},\uld_{[\ulz,\olz]},\old_{[\ulz,\olz]})$. 
Then, the Jordan decomposition $T^{-1}JT = (\change{D_xf}(x^\star,N(x^\star),0) + \change{D_uf}(x^\star,N(x^\star),0)C_{[\ulz,\olz]})$ yields the transformation $T$, and the matrix $J\approx\operatorname{diag}(-6,-6,-4.12,-4.26,-0.93,-0.95,-1,-1)$ is filled with negative real eigenvalues, signifying that the equilibrium $x^\star$ is locally stable. 
The $T$-transformed system~\eqref{eq:Tclsys} is analyzed with the embedding system induced by~\eqref{eq:Tclinclfun}\change{, using \texttt{npinterval}~\cite{AH-SJ-SC:23b} to compute the natural inclusion functions of the symbolic Jacobian matrices to obtain $\sfJ_x,\sfJ_u,\sfJ_w$}. 
The interval $[\uly_0,\oly_0] = Tx^\star + [-0.05,0.05]^4\times[-0.08,0.08]^2\times[-0.11,0.11]^2$ yields $\Phi^{-1}([\uly_0,\oly_0])\subseteq[\ulz,\olz]$, and $\sfE_{T\change{,\Phi([\ulz,\olz])}}(\uly_0,\oly_0,\ulw,\olw) \geqse 0$. 
Thus, using Theorem~\ref{thm:Tinvsets}, the paralleletope $\Phi^{-1}([\uly_0,\oly_0])$ is an invariant set of~\eqref{eq:LFdynamics}. The embedding system in $y$ coordinates is simulated forwards using Euler integration with a step-size of $0.1$ for $90$ time steps, and at each step the localization $[\ulz,\olz]=T^{-1}[\uly(t),\oly(t)]$ is refined. 
Starting from $\smallconc{\uly_0}{\oly_0}$, the forward-time embedding system converges to a point $\smallconc{\uly^\star}{\oly^\star}$, where $\sfE_{T\change{,\Phi([\ulz,\olz])}}(\uly^\star,\oly^\star,\ulw,\olw)=0$. The transformed embedding system is also simulated backwards using Euler integration with a step-size of $0.05$ \change{until} the condition $\smallconc{\uly(t)}{\oly(t)}\geqse 0$ is \change{violated} (call the final time $t'$). 
Using Theorems~\ref{thm:conv}~and~\ref{thm:Tinvsets}, the collection $\{\Phi^{-1}([\uly(t),\oly(t)])\}_{t\geq t'}$ consists of nested invariant paralleletopes converging to the attractive set $\Phi^{-1}([\uly^\star,\oly^\star])$ with region of attraction $\Phi^{-1}([\uly(t'),\oly(t')])$. 
The initial paralleletope takes $0.38$ seconds to verify, and the entire nested family \change{of $93$ paralleletopes} takes $40.28$ seconds to compute.

\section{Conclusions}
Using interval analysis and neural network verification tools, we propose a framework for certifying hyper-rectangle and paralleletope invariant sets in neural network controlled systems. 
The key component of our approach is the dynamical embedding system, whose trajectories can be used to construct a nested family of invariant sets. 
This work opens up an avenue for future work in designing safe learning-enabled controllers.

\addtolength{\textheight}{-12cm}   %

\bibliographystyle{IEEEtran}
\bibliography{SJ,SJ2}

\end{document}